\newcommand{\remove}[1]{}
\DeclarePairedDelimiter\ceil{\lceil}{\rceil}
\newenvironment{sloppypar*}
 {\sloppy\ignorespaces}
 {\par}
\begin{document}
\title{ 
Polynomial Time $k$-Shortest Multi-Criteria Prioritized and All-Criteria-Disjoint Paths}
\titlerunning{Multi-Criteria Shortest Paths}
%
\author{Yefim Dinitz\inst{} \and
Shlomi Dolev\inst{} \and
Manish Kumar\inst{}}

\authorrunning{Y. Dinitz et al.}
%

\institute{
Ben-Gurion University of the Negev, Be’er Sheva, Israel\\
\email{\{dinitz@cs, dolev@cs, manishk@post\}.bgu.ac.il}
}

\maketitle              
%
\begin{abstract}
\begin{sloppypar*}
The shortest secure path (routing) problem in communication networks has to deal with multiple attack layers e.g., man-in-the-middle, eavesdropping,  packet injection, packet insertion, etc. Consider different probabilities for each such attack over an edge, probabilities that can differ across edges. Furthermore, a usage of a single shortest paths (for routing) implies possible traffic bottleneck, which should be avoided if possible, which we term {\em pathneck security avoidance}. Finding all Pareto–optimal solutions for the multi-criteria single-source single-destination shortest secure path problem with non-negative edge lengths might yield a solution with an exponential number of paths. In the first part of this paper, we study specific settings of the multi-criteria shortest secure path problem, which are based on prioritized multi-criteria and on $k$-shortest secure paths. In the second part, we show a polynomial-time algorithm that, given an undirected graph $G$ and a pair of vertices $(s,t)$, finds prioritized multi-criteria $2$-disjoint (vertex/edge) shortest secure paths between $s$ and $t$. In the third part of the paper, we introduce the $k$-disjoint all-criteria-shortest secure paths problem, which is solved in time $O(\min(k|E|, |E|^{3/2}))$.
\end{sloppypar*}
\keywords{Multi-Criteria \and $k$-Shortest Paths \and Disjoint Shortest Paths \and Path Selection \and Shortest Secure Path.}
\end{abstract}
\section{Introduction and Related Work}
\label{sec:introduction}
The shortest secure path (routing) problem in communication networks has to deal with multiple attack layers e.g., man-in-the-middle, eavesdropping,  packet injection, packet insertion, etc. Consider different probabilities for each such attack over an edge that can  differ across edges. We consider these attacks on the edge as multi-criteria and these criteria (attacks) have some positive weights. We need to compute the shortest secure path with the least value of total weight. We study a generalization of the shortest path problem in which multiple paths should be computed with consideration to multiple criteria. In our scenario, these criteria values can be proportional to the attack probabilities on the edges. The application of these paths is mainly in delivering messages via most secure routes. Similar applications are in the area of transportation networks, social networks, etc.

The shortest secure path routing algorithms mainly compute the shortest secure simple path between two nodes, source $s$ and destination $t$. In our system setting edges can have positive weight only and no loops exist in the shortest paths. In practice, while computing the shortest path routing algorithm, in general, the graph source node always picks the shortest path for routing from source $s$ and destination $t$.

Route latency can be computed by finding multiple (almost) shortest paths in the graph from source $s$ and destination $t$. Finding multiple paths is possible by generalizing the Dijkstra algorithm to find more than one path. In the literature, finding multiple shortest path problems is referred to as the $k$-shortest path problem. There are two main variations of the $k-1$ shortest path routing problem. The first is to not only find the shortest path but also $k-1$ other paths in non-decreasing shortest length (two shortest paths can have the same length) and paths are allowed to visit the same node more than once, which allows a \emph{loop}. In other variations, paths are not allowed to visit an already visited node. Paths are required to be \textit{simple} and \textit{loopless.} The $k$-shortest path routing has applications in many areas such as geographical path planning, multiple object tracking, and transportation networks.

While finding the multi-criteria $k$-shortest secure path, some nodes and edges can be traversed by more than one path which may yield accumulated weight on the shared node/edge but this can cause delay on the route. Delay can be avoided by choosing non-overlapping $k$-shortest paths, which are called $k$-disjoint shortest paths. These paths can be categorized further in two types: \emph{multi-criteria node-disjoint paths} and \emph{multi-criteria edge-disjoint paths}. These disjoint paths are distinct paths in the graph from source $s$ and destination $t$ which has a wide range of applications other than routing, such as multi-commodity flow.

\noindent
\textbf{Multi-Criteria Shortest Paths.}
Many real-life problems can be represented as a network, such as transportation networks, biological networks, and communication networks. In these networks, finding the shortest path resolves many issues such as routing and the distance between two molecules. In general, for finding the shortest path, we consider the criterion (objective) of edge weight (cost), which is called the Shortest Path Problem (SPP) with a single criterion. A Multi-Criteria Shortest Path Problem (MCSPP) consists of more than one objective while computing the shortest path between source and destination.

In the literature, many existing results are available on MCSPP. The first result on MCSPP was analyzed by Hansen \cite{Hansen1980}, which focused on a Bi-criteria Shortest Path Problem. Hansen proved that a family of problems exists, for which, any path between a given pair of nodes is a non-dominated path. Hence, any algorithm for solving MCSPP takes exponential time in the worst-case scenario. Thus, no polynomial-time algorithm can guaranteed to determine all non-dominated paths in polynomial time.

Others consider the case of Bi-criteria, introducing edges that have \emph{cost} and \emph{delay} criteria. The goal is to find an $s$-$t$ path that minimizes the cost while having a delay of at most $T$ (threshold). This scenario is known as the \emph{Restricted or Constrained Shortest Path Problem}.
Restricted Shortest Path (RSP) is often used in QoS (quality of service) routing, where the goal is to route a package along the cheapest possible
path while also satisfying some quality constraint for the user. RSP is known to be NP-Hard \cite{10.5555/574848}. As a result, many \emph{fully polynomial-time approximation scheme} (FPTAS) \cite{DBLP:conf/infocom/GoelRKL01, DBLP:journals/mor/Hassin92, DBLP:journals/orl/LorenzR01, DBLP:journals/ior/Warburton87, DBLP:conf/soda/Bernstein12, DBLP:journals/ton/XueZTT08} were designed. Heuristic based algorithms are discussed in these papers \cite{Nahrstedt,DBLP:conf/infocom/YuanL01,Krunz,DBLP:conf/infocom/MisraXY09}. 

The Pareto techniques 
are the only non-heuristics algorithms
to address the $k$-shortest multi-criteria, and they are exponential.
To the best of our knowledge, we are the first to consider specific cases and present polynomial solutions for multi-criteria $k$-shortest paths, in which we can improve the exponential Pareto solution. There are several novel approaches here, one is the ability to use lexicographical based greater than criteria comparison to support multi-criteria $k$-shortest, then proceed to embed such a lexicographic ordering using segments of bits in one integer computer word, where the bound on each segment is bounded by the sum of the criteria values it represents. The reduction of all-criteria $k$-shortest paths to a flow problem can be very useful in future research too.

\remove{
observation that summing all weights, overall edges in the graph, of particular criteria may add only $O(\log n)$ bits, and therefore will support a polynomial solution where criteria summation along a path does not intervenes with other criteria representation when the combined criteria are designed to have enough bits for each criterion. Other results are novel as well, we were happy to find that the vast number of sophisticated approximation results can be improved in many cases by our polynomial exact solutions.
}

\noindent
\textbf{Shortest Secure Paths.} The problem of the shortest secure path with respect to multi-criteria is not well studied. In Oh et al.~\cite{DBLP:conf/iccS/OhLC06} authors proposed a mechanism to find a shortest and secure path by appending the \emph{trust weight} and \emph{distance weight} for each edge. This approach improved the security level practically but no theoretical bound existed and it is limited for the case of two criterion while our paper deals with any number of criteria.

\noindent
\textbf{$k$-Shortest Path Problem.} The problem of finding the shortest paths in an edge-weighted graph is an important and well-studied problem in computer science. Dijkstra's sequential algorithm \cite{10.1007/BF01386390} computes the shortest path to a given destination vertex from every other vertex in $O(m + n \log n)$ time. The \textit{k}-shortest paths (KSP) asks to compute a set of top \textit{k}-shortest simple paths from  vertex \textit{s} to vertex \textit{t} in a digraph. In 1971, Yen \cite{10.2307/2629312} proposed the first algorithm with the theoretical complexity of $O(kn(m+n \log n))$ for a digraph with \textit{n} vertices and \textit{m} edges. \remove{The KSP problem has numerous applications in various kinds of networks such as road and transportation networks, communications networks, social networks, etc.}

The famous algorithm for the $k$-shortest path problem was proposed by Eppstein \cite{DBLP:conf/focs/Eppstein94} which allows non-simple paths (with loops) and runs in $O(m+n \log n+kn)$ time. In the initialization phase, the algorithm uses a \emph{shortest path tree} to build a data structure that contains information about all \emph{s-t} paths and how they interrelate with each other, in time $O(m+n)$. The running time for the initialization can be reduced from $O(m+n \log n)$ to $O(m+n)$ if the shortest path tree can be computed in time $O(m+n)$. In the enumeration phase, a \emph{path graph} is constructed. The path graph is a min-heap where every path starting from the common root corresponds to a \emph{s-t} path in the original graph. If we want the output paths to be sorted by the length in increasing order then the enumeration phase requires $O(k \log k)$ time. Frederickson's heap selection algorithm \cite{DBLP:journals/iandc/Frederickson93} can be used to enumerate the paths after the initialization phase in $O(k)$ time. Other KSP algorithms are discussed in these papers \cite{DBLP:journals/networks/KatohIM82,DBLP:conf/icalp/RodittyZ05,DBLP:journals/ipl/GotthilfL09,DBLP:journals/networks/Feng14,DBLP:conf/focs/HershbergerS02,DBLP:conf/isaac/KurzM16,DBLP:journals/talg/HershbergerMS07}.

\noindent
\textbf{$k$-Disjoint Shortest Path Problem.} The $k$-disjoint shortest path problem on a graph with $k$ source-destination pairs $(s_i, t_i)$ looks for $k$ pairwise node/edge-disjoint shortest $s_i$ - $t_i$ paths. The output is prioritized: the first path should be the shortest, the second one should be the shortest conditioned by that property of the first path and by disjointness, and so on. The $k$-disjoint shortest path problem is known to be NP-complete if $k$ is part of the input.

The problem of two disjoint shortest paths was first considered by Eilam-Tzoreff \cite{DBLP:journals/dam/Eilam-Tzoreff98}. Eilam-Tzoreff provided a polynomial-time algorithm for $k$= 2, based on a dynamic programming approach for the weighted undirected vertex-disjoint case. This algorithm has a running time of $O(|V|^8)$. Later, Akhmedov \cite{DBLP:conf/csr/Akhmedov20} improved the algorithm of Eilam-Tzoreff, whose running time is $O(|V|^6)$ for the unit-length case of the 2-Disjoint Shortest Path and $O(|V|^7)$ for the weighted case of the 2-disjoint shortest path. In both cases Akhmedov \cite{DBLP:conf/csr/Akhmedov20} considered the undirected vertex disjoint shortest path.

\noindent
\textbf{Organization of the paper.} Section~\ref{s:reduction} presents the reduction of multi-criteria weight to single weight. Section~\ref{s:ssp} describes the first prioritized multi-criteria $k$-shortest path algorithm. Section~\ref{s:twoDisjoint} introduces the prioritized multi-criteria 2-disjoint shortest path algorithm. The $k$-disjoint all criteria shortest path algorithm is presented in Section~\ref{s:disjointAll}, with the analysis.

\section{Finding Prioritized Multi-criteria $k$-Shortest Paths in Polynomial Time}

In this section, we reduce the value of multiple attacks on edges into a single value, which we call a reduction from multi-criteria weight to single weight.

\subsection{Reducing Multi-criteria Weight to Single Weight}
\label{s:reduction}
\begin{figure}[H]

\centering 
\includegraphics[page=1,width=\textwidth]{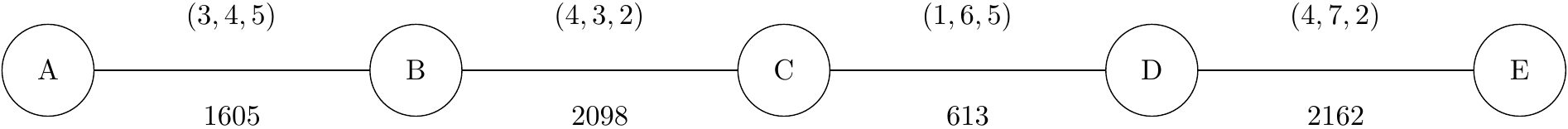}
\caption{Example for reduction of multi-criteria weight to single weight, multi-criteria weights (3,4,5) on the edge represents weight in the form of (3=$w_1$, 4=$w_2$, 5=$w_3$)}
\label{fig:fig1}
\end{figure} 
\vspace{-10mm}
\begin{figure}[H]

\centering 
\includegraphics[page=3,width=\textwidth]{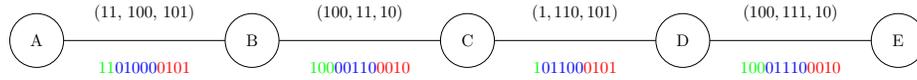}
\caption{Example for reduction of multi-criteria weight to single weight for binary bits }
\label{fig:fig2}
\end{figure} 
\vspace{-10mm}
\begin{figure}[H]

\centering 
\includegraphics[page=2,width=\textwidth]{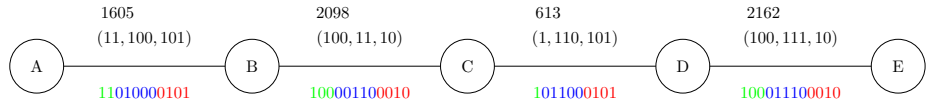}
\caption{Length of the (only and therefore) shortest path from $A$ to $E$, length of shortest path = 6478 (\textcolor{green}{1100}\textcolor{blue}{10100}\textcolor{red}{1110}), each color represent the respective sum of weights for each criterion}
\label{fig:fig3}
\end{figure}

We consider multi-criteria in weight(cost)-function in a prioritized manner. In our reduction from multi-criteria to single criterion, we ensemble the weights of the monotonic prioritized criteria into one weight. Next, we present a reduction of the prioritized multi-criteria $k$-shortest path problem to the single criterion $k$-shortest path problem. The idea is to combine the different weights into a single ``ensembled'' weight, such that the most significant part of the ensembled weight, is the weight of the most important criteria. Say, using the first most important $k_1$ bits, that suffice to accumulate the sum of weights, of the most prioritized criterion. The second most important weight resides in the next $k_2$ bits of the edge weight, and so on and so forth. Our algorithms deal with any number of attacks (criteria/weights).

We illustrate the conversion using the following example.
In the above example, we showed the reduction of multi-criteria weight to single weight. In Fig.~\ref{fig:fig1} each edge holds a vector of criteria and sum of weights. In Fig.~\ref{fig:fig2} each edge holds a vector of criteria in binary form and a sum of weights in the colored binary form where each color represents the weight of each criterion. In Fig.~\ref{fig:fig3}, we computed the only shortest path between node $A$ and $E$ and the total length of path represented in a colored binary number, in that each color represents the sum of the weight of each individual criterion. The summarized calculations are in Table~\ref{tab:example2}. 

\begin{table}
\caption{Reduction of multi-criteria weight to single weight, where $a_i$= sum of $w_i$ weights over all edges. $a_1 = 12, a_2= 20, a_3 = 14$ , $max_a = 20$, for which value of $max_a$ is ($2^5 > max_a  \geq 2^4 > 14$), so choose $2^5 $ due to 20. In fact, according to $a_3$ only four bits suffice for the description of the $w_3$ to maximal value in a shortest (any simple) path, as $14<2^4$. We can use 5 bits for the first criterion but in this example 4 bits are enough to represent $w_3$.}
\label{tab:example2}
\begin{center}

\begin{tabular}{l*{4}{r}r}
 \toprule
 & Edge 1 & Edge 2 & Edge 3 & Edge 4 \\
\hline
$w_1$ & 3 & 4  & 1 & 4 \\
$w_2$ & 4 & 3 & 6  & 7 \\
$w_3$ & 5 & 2 & 5  & 2 \\
\hline
 $w'_1 = w_1 \cdot 2^{9}$ &   1536 & 2048 & 512 & 2048 \\
 $w'_2 = w_2 \cdot 2^4$ & 64 & 48 & 96  & 112 \\
 $w'_3 = w_3 \cdot 2^{0}$ &  5 & 2 & 5 & 2  \\
\hline
\textit{Total Weight} & 1605  & 2098 & 613 & 2162 \\
\hline
\textbf{ Binary Conversion} & & &  &\\
\hline
$w_1$ & 11 & 100 & 01  & 100 \\
 $w_2$ & 100 & 11 & 110  & 111 \\
 $w_3$ & 101 & 10 & 101  & 10 \\
\hline
 $w'_1 = w_1 \cdot 2^{9}$ & 11000000000 & 100000000000 & 1000000000 &  100000000000 \\
 $w'_2 = w_2 \cdot 2^4$ & 1000000 & 110000 & 1100000  & 1110000 \\
 $w'_3 = w_3 \cdot 2^{0}$ & 101 & 10 & 101 & 10  \\
 \hline
 \textit{Total Weight} & \textcolor{green}{11}\textcolor{blue}{01000}\textcolor{red}{0101}  & \textcolor{green}{100}\textcolor{blue}{00110}\textcolor{red}{0010} & \textcolor{green}{1}\textcolor{blue}{01100}\textcolor{red}{0101} & \textcolor{green}{100}\textcolor{blue}{01110}\textcolor{red}{0010} \\
 \bottomrule
  
\end{tabular}
\end{center}
\end{table}

\subsection{Prioritized Multi-Criteria $k$-Shortest Simple Paths}
\label{s:ssp}

This section studies the following prioritized multi-criteria $k$-shortest simple paths problem. The input is an undirected graph $G=(V,E)$, where each edge $e$ holds vector $\bar w(e)$, where $\bar w(e)=(w_1(e), w_2(e), ..., w_q(e))$ and $w_i(e)$ is the weight of $e$ w.r.t.\ criterion $c_i$, source node $s$ and destination node $t$, $s,t \in V$, and integer $k$. We say that a path $P$ from $x$ to $y$ is the shortest w.r.t.\ criterion $c_i$, if $c_i(P)=\sum_{e \in P} w_i(e)$ is minimal among all $c_i(P)=\sum_{e \in P} w_i(e)$ over all paths $P$ from $x$ to $y$. A polynomial-time algorithm for solving the problem is presented in this section.

 For multiple criteria, to avoid the exponential number of paths, we reduce the set of all criteria as a single value for each edge. We reduce the prioritized multi-criteria by a reduction to a single criterion. Let us define the ensembled edge weights as follows. Let $W_i = \sum_{e \in E}^{} w_i(e)$, $1 \leq i \leq q$. Let $l_i = \ceil*{\log_2 (W_i + 1)}$, $1\leq i \leq q$, and let $r_q = 0$, $r_i = \sum_{j = i+1}^{q} l_i $, $0\leq i \leq q-1$. The ensembled weight of the edge $e\in E$ is defined to be $EW(e) =\sum_{j = 1}^{q} (2^{r_i}w_i(e))  $. As usual we define the ensembled weight of any path $P$ as $EW(P) = \sum_{e \in P}^{} EW(e)$.

The multi-criteria shortest path problem has a rich history, several approximation and heuristic-based algorithms have been proposed to solve it. Instead of considering the approximation or heuristic approach, we are interested in problem families for which a polynomial solution exists. For example, (1) if one criterion is that no edge on the path should weigh more than a given total attack threshold ($T$), then when computing the shortest multi-criteria algorithm, we do not consider this edge. (2) Another family of multi-criteria is \emph{prioritized multi-criteria} where one would like to optimize the first criterion (attack) $c_1$, and within all solutions that optimize $c_1$, find the optimal solution for the second criterion (attack) $c_2$, and so on. (3) A combination of the two multi-criteria above.

Thus, as explained above, to ensemble the weights of the monotonic prioritized criteria into one weight, we use the most important part of an edge ensemble weight for the most important criteria, and the least important part of an edge ensemble weight for the least important criteria, and similarly for criteria in between. 

\begin{algorithm}[!t]
\caption{Generalized Dijkstra Algorithm for Multi-Criteria Shortest Path}
\label{algo:Dijkstra}
\KwIn{An undirected graph $G=(V,E)$, where each edge $e$ holds vector $\bar w(e)$, where $\bar w(e)=(w_1(e), w_2(e), ..., w_q(e))$ and $w_i(e)$ is the weight of $e$ w.r.t.\ criterion $c_i$, source node $s$, destination node $t$, and threshold $T$}

\KwOut{Multi-criteria shortest simple path}

 \SetKwFunction{FMain}{}
 \SetKwProg{Fn}{Procedure}{ GeneralizedDijkstra(Graph $G$, Source $s$, Destination $t$)}{}
 \DontPrintSemicolon
 \Fn{}{
    Initialize: Source \{s\}\\
    $dist(s)$ = 0\\
    $EW$ = 0\\
    $r_q$ = 0\\
    
        \SetKwFunction{FMain}{}
        \SetKwProg{Fn}{for each vertex $v$ except for $s$}{ do}{}
        \DontPrintSemicolon
        \Fn{}{
                $dist(v)$ =  $\infty$
        }
    
    $S= \phi$\\
    $Q = V$\\
    Compute $W_i$ = $\sum_{e \in E} w_i(e)$, $1 \leq i \leq q$\\
    Compute $l_i$= $\lceil \log_2 (W_i + 1) \rceil$, $1 \leq i \leq q$\\
    Compute $r_i$ = $\sum_{j = i+1}^{q} l_i$, $0 \leq i \leq q-1$  \\
    
    \SetKwFunction{FMain}{}
             \SetKwProg{Fn}{for each edge $e$}{}{}
             \DontPrintSemicolon
             \Fn{}{
                   $EW(e)$ = $\sum_{j = 1}^{q} (2^{r_i}w_i(e))$ \\
                    \If{$EW(u,v) \geq T$}{
                    Delete $edge(u,v)$
                        }
                    }
    
    \While{$Q \neq \phi$}{
        $u=$ Extract-Min($Q$)\\
        $S = S \cup \{u\}$

        \SetKwFunction{FMain}{}
        \SetKwProg{Fn}{for each vertex $v \in G.Adj[u]$}{ do}{}
        \DontPrintSemicolon
        \Fn{}{

                $Relax (u,v)$
              
        }
     }
 }

 \SetKwFunction{FMain}{}
 \SetKwProg{Fn}{Procedure}{ Relax($u,v$)}{}
 \DontPrintSemicolon
 \Fn{}{
             \SetKwFunction{FMain}{}
             \SetKwProg{Fn}{for each neighbor of $u$}{}{}
             \DontPrintSemicolon
             \Fn{}{

                    \If{$dist(v) > dist(u) + EW(u,v)$ }{
                        $dist(v) = dist(u) + EW(u,v)$
                    }
                    
             }
 
 }

\end{algorithm}

To make sure that the portion of edge weight dedicated to criteria does not overlap, we assign each portion a span of bits in the ensemble weight of an edge to suffice for accumulating the criteria weight along the (shortest) path. We can bound the number of bits needed for accumulating the bound on the shortest path by summing up all weights of the criteria in all edges in the graph. 

Finding the $k$-shortest paths with the ensemble weights results in that these are $k$-shortest paths in the most important criterion $c_1$, as all other criteria do not compete with the most important part of the weights when computing the shortest path(s). Thus, the second criterion $c_2$ breaks ties among the paths as above with the same value of the first criterion. In particular, if the weight of the heaviest shortest path according to $c_1$ is $w_1$, the selection from the set of the shortest paths with weight $w_1$ will be according to the second prioritized criterion $c_2$. If the set of shortest paths with $w_1$ is chosen according to the second criterion where $w_2$ is the shortest among them, then from the set of paths with weights $w_1$ and $w_2$, paths with the lightest weight according to the third criterion are chosen, and so on and so forth. 

\remove{
\begin{algorithm}
\caption{Multi-Criteria $k$-Shortest Secure Simple Paths}
\label{algo:MCKSSP}
\KwIn{An undirected graph $G= (V,E)$, source node $s$, destination node $t$, and number of shortest path to find $k$
}
\KwOut{Multi-criteria $k$-shortest secure simple paths}

$\cal P$ = Empty path set\\
$P$ = Empty stack\\
\emph{noMorePaths} = $false$

 \SetKwFunction{FMain}{}
 \SetKwProg{Fn}{while(|$\cal P$| < $k$ $\land$ \emph{noMorePaths} = $false$ ) }{ do}{}
 \DontPrintSemicolon
 \Fn{}{
        $G \leftarrow reduce(G)$\\
        $P$ = $GeneralizedDijkstra(G, s, t)$\\
        \If{|$P$| = 0}{
            \emph{noMorePaths} = $true$ 
        }
        \Else{
            $\cal P$ = $\cal P$ $ \cup$ {$P$}\\
            $E \leftarrow$ $E \backslash \{ lighestEdge(P)\}$
        }
 
 }

return $\cal P$
\end{algorithm}
}

The ensemble of the criteria weight into one weight implies finding monotonic multi-criteria $k$-shortest paths. These paths are not necessarily disjoint (as the $k$-shortest simple paths) and also for edge/node disjoint paths. These paths can be computed in polynomial time as long as $k$ is fixed~\cite{DBLP:journals/dam/Eilam-Tzoreff98}.

Our approach is based on the generalized Dijkstra algorithm~\cite{DBLP:journals/corr/abs-2101-11514} for the multi-criteria shortest path. Using the Dijkstra algorithm, it is possible to determine the shortest distance (or the least attack value) between a start node and any other node in a graph. The idea of the algorithm is to continuously apply the original Dijkstra algorithm with the precomputed ensembled weight for each edge while removing the edges that hold more ensembled weight than Threshold ($T$).

Our approach consists of the following steps: $Q$ is the set of nodes for which the shortest path has not been found. Initialize the source node with distance 0 and all nodes with distance ``infinite''. Reduce the multi-criteria into a single criterion. At each iteration, the node $v$ that has the minimum distance value (sum of weights $EW$) to the source is added to the $S$, which provides the shortest path from the source node to the destination node.

\remove{
\begin{sloppypar*}
For computing the multi-criteria $k$-shortest secure simple paths, we use the Algorithm~\ref{algo:Dijkstra} (generalized Dijkstra algorithm~\cite{DBLP:journals/corr/abs-2101-11514}) for computing the shortest secure path. Consider $\cal P$ an empty path set, which is used for storing final $k$ shortest secure paths, and $P$ an empty stack used for storing tentative shortest secure path. Execute the while loop for $k$ times, in each step compute a shortest secure path using $GeneralizedDijkstra(G, s,t)$ (Algorithm~\ref{algo:Dijkstra}), put it into $P$, and store $P$ in the path set $\cal P$. The $k^{th}$ shortest secure path might share edges and sub-paths with $(k-1)^{th}$ shortest secure path. Such overlapping can be avoided by removing the lightest edge from the last shortest secure path and recomputing the $k^{th}$ on the reduced graph. This can be computed using Algorithm \ref{algo:MCKSSP}.
\end{sloppypar*}

Algorithm \ref{algo:MCKSSP} as follows: A source $s$ node and destination $t$ node is given in graph $G$ with the $k$ number of shortest secure paths to find. Initialize $\cal P$ = Empty path set, $P$ = Empty stack and \emph{noMorePaths} = $false$. In each round compute the shortest secure path using $GeneralizedDijkstra(G, s, t)$ and store it in stack $P$. If there is no more path stored in stack $P$, stop the execution. Otherwise, store the path from stack $P$ in the path set $\cal P$, remove the lightest edge from the path and reduce the graph. Repeat the same execution on the reduced graph until $k$ paths are found. Note that when there is a constant number, $c$, of criteria then (any chosen constant) $k'$ paths with each of the possible priorities can be found keeping the computation polynomial.}

For computing the multi-criteria $k$-shortest simple paths, we use the Algorithm~\ref{algo:Dijkstra} (generalized Dijkstra algorithm) for computing the shortest path and Yen's algorithm \cite{10.2307/2629312} for computing $k$-paths\footnote{Finding $k$-shortest path can be done using different approaches. Such as by removing the lightest edge~\cite{Victor2017} and by removing the already found shortest path(s)~\cite{5473995}. Our algorithm and reduction from multi-weight to single weight work with both scenarios.}.  

\remove{
\textcolor{red}{Then take $(k-1)^{th}$ shortest path and make each node in the path unreachable in turn, i.e. remove a particular edge that goes to the node within the route. Once the node is unreachable, find the shortest path from the preceding node to the destination. Then we have a new path that is created by appending the common sub-path (from the source node to the preceding node of the unreachable node) and adds the new shortest path from the preceding node to the destination node. This path is added to the list $B$, only when it has not appeared in list $A$ or list $B$ before. After repeating this for all nodes in the path, we have to find the shortest path in list $B$ and move that to list $A$. We just have to repeat this process for $k$ times.}}

\remove{
\textcolor{red}{\textbf{Remark: }
Finding $k$-shortest path can be done using different approaches. Such as by removing the lightest edge~\cite{Victor2017} and by removing the already found shortest path(s)~\cite{5473995}. Our algorithm and reduction from multi-weight to single weight work with both scenarios.}
}

\begin{theorem}
The prioritized multi-criteria $k$-shortest paths problem in an undirected graph can be solved in polynomial time.
\end{theorem}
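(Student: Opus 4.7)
The plan is to show that the ensembled weights $EW(e)$ defined in Section~\ref{s:reduction} faithfully encode the lexicographic (prioritized) order on criteria tuples, so that a standard single-criterion $k$-shortest path algorithm applied to the transformed graph produces the desired output, and to verify that the transformation and the algorithm both run in polynomial time.

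First I would establish the key structural lemma: for any $s$--$t$ path $P$ and any criterion $c_i$, the sum $c_i(P) = \sum_{e \in P} w_i(e)$ is bounded above by $W_i < 2^{l_i}$, simply because $P$ is a subset of edges. Consequently, in the integer $EW(P) = \sum_{i=1}^{q} 2^{r_i} c_i(P)$, the contribution of criterion $c_i$ occupies exactly the bit range $[r_i, r_i + l_i)$, which by the definition of $r_i = \sum_{j=i+1}^{q} l_j$ is disjoint from the ranges of the other criteria. Hence no carries propagate between criterion blocks when summing edge weights along $P$, and the binary representation of $EW(P)$ is the concatenation of the binary representations of $c_1(P), c_2(P), \ldots, c_q(P)$ in priority order. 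Comparing two such integers by magnitude is therefore identical to comparing the tuples $(c_1,\ldots,c_q)$ lexicographically with priority on $c_1$.

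Next I would invoke this equivalence to conclude correctness. Let $P_1, P_2, \ldots, P_k$ be the $k$ simple $s$--$t$ paths of smallest $EW$-value produced by Algorithm~\ref{algo:Dijkstra} combined with Yen's algorithm on the single-weighted graph. By the lemma, these are exactly the $k$ simple $s$--$t$ paths that are smallest in the lexicographic order on $(c_1,\ldots,c_q)$, which is the definition of the prioritized multi-criteria $k$-shortest paths: each $P_j$ minimizes $c_1$ among all paths not already chosen, breaks remaining ties by $c_2$, and so on.

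For complexity, I would observe that each $l_i = \lceil \log_2(W_i+1)\rceil$ is bounded by $\log_2(|E| \cdot w_{\max}) + 1$, so the total bit-length of every $EW(e)$ is $\sum_i l_i = O(q \log(|E| w_{\max}))$, which is polynomial in the input size (even when $w_i(e)$ is encoded in binary). Thus arithmetic on ensembled weights---addition, comparison, extract-min in the priority queue---takes polynomial time per operation. Yen's algorithm runs in $O(k|V|(|E| + |V|\log|V|))$ Dijkstra-operations, each of which is polynomial with the enlarged word size, so the overall running time is polynomial in $|V|$, $|E|$, $k$, $q$, and $\log w_{\max}$. The only subtle point, and the one I would emphasize in the write-up, is the disjoint-bit-range argument, because it is precisely what prevents the ensembling from blurring the priority distinctions; once that is in hand, the remainder is a routine invocation of the standard $k$-shortest path machinery.
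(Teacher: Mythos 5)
Your proposal is correct and follows the same route as the paper: reduce the prioritized criteria to a single ensembled weight $EW$ and invoke a single-criterion $k$-shortest-path algorithm (Yen's). The paper's own proof is a two-sentence assertion of this reduction; your disjoint-bit-range lemma --- that $c_i(P) \le W_i < 2^{l_i}$ for any simple path $P$, so no carries cross criterion blocks and magnitude comparison of $EW$ coincides with lexicographic comparison of $(c_1,\dots,c_q)$ --- is exactly the justification the paper leaves implicit, and your polynomial bound on the bit-length $\sum_i l_i$ correctly completes the running-time argument.
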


\begin{proof}
The single criterion $k$-shortest paths problem is solvable in polynomial time. We polynomially reduced the multi-criteria weights where criteria are used in a prioritized manner to the single criterion weight. So prioritized multi-criteria $k$-shortest paths problem is also solvable in polynomial time.  \qed
\end{proof}


\section{Prioritized Multi-Criteria 2-Disjoint (Node/Edge) Shortest Paths} 
\label{s:twoDisjoint}
In this section, we suggest an algorithm solving the 2-shortest paths edge/node independent problem (see Eilam-Tzoreff \cite{DBLP:journals/dam/Eilam-Tzoreff98}) for the case of prioritized criteria from a single source $s$ to a single destination $t$ in an undirected graph $G$, where each edge $e$ holds vector $\bar w(e)$, where $\bar w(e)=(w_1(e), w_2(e), ..., w_q(e))$ and $w_i(e)$ is the weight of $e$ w.r.t.\ criterion $c_i$, source node $s$ and destination node $t$, $s,t \in V$. A polynomial time algorithm solving the problem is presented in this section.

We reduce the prioritized multi-criteria case to the case of a single criterion, similarly to Section \ref{s:reduction}.
Further, for finding 2-disjoint shortest paths from $s$ to $t$, we use a reduction to the case where two sources and two destinations are given (described later). 
Then, we find the 2-disjoint shortest paths in the resulted graph $\bar G$ by using the algorithm of Akhmedov \cite{DBLP:conf/csr/Akhmedov20}, which
computes the 2-disjoint shortest paths for two sources and two destinations in time $O(|V|^7)$.

Let us describe our reduction.
For the edge-disjoint case, it is simple. We add to $G$ two nodes $s_1, s_2$ with dummy edges $(s_1,s), (s_2,s)$, two nodes $t_1, t_2$ with dummy edges $(t,t_1), (t,t_2)$, define the weight to be zero for the dummy edges, and declare $s_1, s_2$ to be the sources and $t_1, t_2$ to be the destinations instead of $s$ and $t$. After finding the 2-disjoint shortest paths in the resulting graph $\bar G$, we return them with the dummy edges removed.

The reduction for the node-disjoint case is more complicated. We add to $G$ four nodes $s_1, s_2, t_1, t_2$, which will be the sources and destinations instead of $s,t$. If $G$ contains edge $(s,t)$, then we replace it with edge $(s',t')$ of the same weight, and add the dummy edges $(s_1,s')$, $(s_2,s')$, $(t',t_1)$, $(t',t_2)$. 
For any other edge $(s,v)$ incident to $s$, we replace it with edge $(s_v,v)$ of the same weight, where $s_v$ is a new node, and add the dummy edges $(s_1,s_v), (s_2,s_v)$. Symmetrically, for any other edge $(v,t)$ incident to $t$, we replace it with edge $(v,t_v)$ of the same weight, where $t_v$ is a new node, and add the dummy edges $(t_v,t_1), (t_v,t_2)$.
The weights of all dummy  edges are set be 1.
Finally, we remove nodes $s$ and $t$ with no incident edges.
After finding the 2-disjoint shortest paths in the resulting graph $\bar G$, we return their \emph{abridged variant}: with the dummy edges incident to their end-nodes $s_i$ and $t_i$ shrunken to $s$ and $t$, respectively.

Let us show the correctness of the latter reduction.
A necessary condition for using the algorithm of Akhmedov is that the terminals quadruple $(s_1, s_2, t_1, t_2)$ is not rigid, where it is called rigid, is if $s_1, t_1 \in L(s_2,t_2)$ and $s_2, t_2 \in L(s_1,t_1)$, where $L(s_i, t_i)$ is the set of all nodes belonging to at least one shortest path between $s_i$ and $t_i$. Let us prove that $(s_1, s_2, t_1, t_2)$ is not rigid in $\bar G$. Assume for the contradiction, w.l.o.g., that a shortest path $P$ from $s_1$ to $t_1$ in $\bar G$ contains two consequent edges $(u,s_2)$ and $(s_2,v)$. Consider path $P'$ obtained from $P$ by replacing its prefix from $s_1$ to $v$ by edge $(s_1,v)$, with weight 1. Path $P'$ from $s_1$ to $t_1$ is lighter than $P$, since the weights of  the three removed edges: the first one of $P$, $(u,s_2)$, and $(s_2,v)$, are 1 each,---a contradiction.

\begin{algorithm}

\caption{Prioritized Multi-criteria 2-Disjoint Shortest Paths}
\label{algo:disjoint}
\KwIn{An undirected graph $G=(V,E)$, where each edge $e$ holds vector $\bar w(e)$, where $\bar w(e)=(w_1(e), w_2(e), ..., w_q(e))$ and $w_i(e)$ is the weight of $e$ w.r.t.\ criterion $c_i$, source node $s$ and destination node $t$ }

\KwOut{Prioritized multi-criteria 2-disjoint shortest paths from $s$ to $t$}

    Compute $W_i$ = $\sum_{e \in E} w_i(e)$, $1 \leq i \leq q$\\
    Compute $l_i$= $\lceil \log_2 (W_i + 1) \rceil$, $1 \leq i \leq q$\\
    Compute $r_i$ = $\sum_{j = i+1}^{q} l_i$, $0 \leq i \leq q-1$  \\
    
        \SetKwFunction{FMain}{}
             \SetKwProg{Fn}{for each edge $e \in E$}{}{}
             \DontPrintSemicolon
             \Fn{}{
                        $EW(e)$ = $\sum_{j = 1}^{q} (2^{r_i} \cdot w_i(e))$ \\
             }

    Construct extended graph $\bar G$ with two sources $s_1, s_2$ and two destinations $t_1, t_2$ as described in the text \\
    $(\bar P_1, \bar P_2) \leftarrow 2DSP(\bar G,EW,s_1, t_1, s_2, t_2)$  \\
    transform $\bar P_1, \bar P_2$ to to their abridged variants $P_1, P_2$ (see the text)\\
    return $(P_1, P_2)$

    \SetKwFunction{FMain}{}
    \SetKwProg{Fn}{Procedure}{ 2DSP($G',w,s_1, t_1, s_2, t_2$)}{}
    \DontPrintSemicolon
    \Fn{}{
            Execute the algorithm of Akhmedov \cite{DBLP:conf/csr/Akhmedov20} for computing 2-disjoint shortest paths in graph $G'$ with edges weights $w$
     }
             
\end{algorithm}

Let us show the legality and optimality of the returned solution.
Since the paths returned by the algorithm of Akhmedov are node-disjoint, their abridged variants are node-disjoint. 
By the reasons in the proof of non-rigidity of $(s_1, s_2, t_1, t_2)$, the returned paths 
do not contain any terminal out of $s_1, s_2, t_1, t_2$ as an intermediate node. Therefore, their abridged variants do not contain dummy edges, and thus are legal paths in $G$.
Let $(P^*_1, P^*_2)$ be the optimal pair of 2-disjoint shortest paths from $s$ to $t$ in $G$. The paths corresponding to them in $\bar G$---obtained from them by the operations in the reduction applied to their first and last edges---are node-disjoint and have weights greater by 2 than the weights of  $P^*_1$ and $P^*_2$. The optimal paths in $\bar G$ are not worse, and their abridged variants are by lighter by 2. Therefore, the pair of paths returned by the reduction is not worse than the pair $(P^*_1,P^*_2)$, as required.

   \begin{figure*}
        \centering 
        \includegraphics[width=0.80\textwidth]{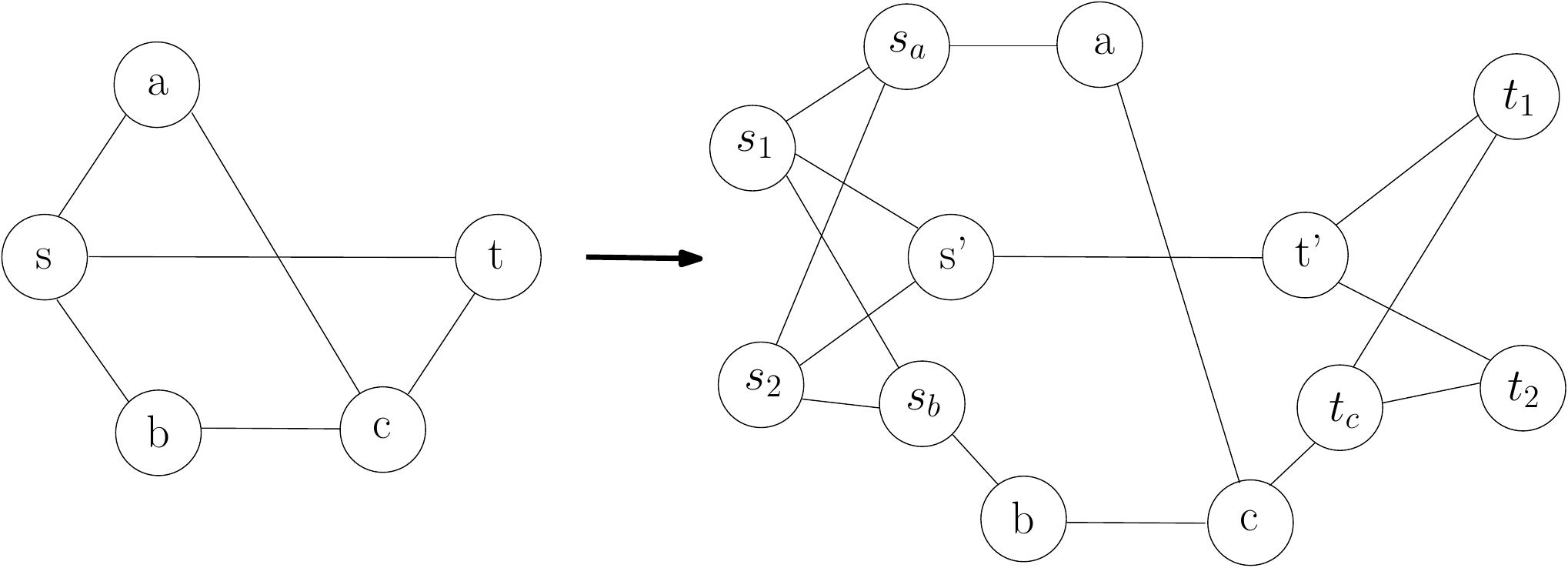}
        \caption{Reduction for the node-disjoint case }
        \label{fig:fig4}
        \end{figure*}

The algorithm as above is presented in pseudo-code in Algorithm \ref{algo:disjoint}. Its correctness, together with the polynomiality of the algorithm of Akhmedov and of the reduction, implies the following statement.

\begin{theorem}
The prioritized multi-criteria 2-disjoint shortest paths problem in an undirected graph can be solved in polynomial time.
\end{theorem}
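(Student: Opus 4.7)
The plan is to chain together the two reductions already developed in the paper and then invoke the algorithm of Akhmedov. First, I would apply the ensemble-weight reduction of Section~\ref{s:reduction} to the input multi-criteria graph $G$, producing a single-criterion edge weight $EW(e)=\sum_{i=1}^{q} 2^{r_i} w_i(e)$. The key property to verify here is that $EW$ faithfully encodes the prioritized lexicographic ordering on criterion vectors: since $l_i = \lceil \log_2(W_i+1)\rceil$ bits suffice to store the criterion-$i$ sum along any simple path (bounded by $W_i$), the bit-segments never overflow into each other, so minimizing $EW$ along a path first minimizes $c_1$, then $c_2$ as a tiebreaker, and so on. This step is clearly polynomial, and reduces the problem to the single-criterion 2-disjoint shortest paths problem on $G$ with weights $EW$.

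Next, I would invoke the $(s,t)\to(s_1,s_2,t_1,t_2)$ construction described in the paragraphs preceding the theorem to produce the auxiliary graph $\bar G$; in the edge-disjoint version the dummy edges have weight $0$, and in the node-disjoint version the edges incident to $s$ and $t$ are split and the new dummy edges carry weight $1$. Two things must be verified: (i) the terminal quadruple $(s_1,s_2,t_1,t_2)$ is non-rigid in $\bar G$, via the swap argument sketched in the text (a shortest $s_1$--$t_1$ path cannot enter $s_2$, because the $s_1$-to-$v$ dummy shortcut of weight $1$ strictly shrinks the two dummy edges of weight $1$ it replaces together with the original first edge of the path); and (ii) any pair returned by Akhmedov's algorithm avoids the foreign terminals as internal nodes, so its abridged variant (with dummy edges contracted) is a legal pair of 2-disjoint $s$-$t$ paths in $G$.

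Having $\bar G$, I would call the algorithm of Akhmedov~\cite{DBLP:conf/csr/Akhmedov20}, which solves the two-source two-destination 2-disjoint shortest paths problem on undirected graphs with non-negative edge weights in $O(|V|^7)$ time. Its output is then abridged to a pair of paths in $G$. Optimality follows from a weight-preserving correspondence: any pair $(P_1^\ast,P_2^\ast)$ of 2-disjoint shortest $s$-$t$ paths in $G$ lifts to a 2-disjoint pair in $\bar G$ whose total $EW$-weight exceeds that of $(P_1^\ast,P_2^\ast)$ by exactly the constant contribution of the four used dummy edges, and the inverse abridging map preserves this offset; hence an optimum in $\bar G$ abridges to an optimum in $G$ with respect to $EW$, and therefore to an optimum with respect to the prioritized multi-criteria order.

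The main obstacle, I expect, is the bookkeeping in the node-disjoint reduction and in the optimality correspondence: one must check that the constructed $\bar G$ contains no unintended parallel edges, that the non-rigidity argument continues to apply at every terminal (not merely $s_2$), and that the constant weight offset between a pair in $G$ and its lift in $\bar G$ is the same for \emph{every} candidate pair, so that a minimizer is preserved by the reduction. The remaining pieces---computation of $EW$, construction of $\bar G$, execution of Akhmedov, and the abridging step---are all clearly polynomial, giving an overall running time dominated by the $O(|V|^7)$ call to Akhmedov.
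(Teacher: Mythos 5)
Your proposal follows the paper's own argument essentially step for step: the ensemble-weight reduction to a single criterion, the construction of $\bar G$ with terminals $s_1,s_2,t_1,t_2$ (weight-$0$ dummies in the edge-disjoint case, edge-splitting with weight-$1$ dummies in the node-disjoint case), the non-rigidity swap argument, the call to Akhmedov's algorithm, and the legality/optimality argument via the constant weight offset between a pair in $G$ and its lift in $\bar G$. The approach and all the key verification points you identify coincide with those in the paper, so the proposal is correct and not materially different.
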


We illustrate the reduction for the node-disjoint case in example (Fig. \ref{fig:fig4}).

\section{$k$-Disjoint All-Criteria-Shortest Paths}
\label{s:disjointAll}

This section studies the following $k$-disjoint all-criteria-shortest paths problem. The input is a directed graph $G=(V,E)$, $q$ weight functions 
$w_i$ on edge set $E$, $1 \le i \le q$, source node $s$ and destination node $t$, $s,t \in V$, and integer $k$. We say that a path $P^*$ from $x$ to $y$ 
is the shortest w.r.t.\ criterion $c_i$, if $c_i(P^*)=\sum_{e \in P^*} w_i(e)$ is minimal among all $c_i(P)=\sum_{e \in P} w_i(e)$ over all paths $P$ from $x$ to $y$. A set of $k$ (edge-)disjoint paths from $s$ to $t$ such that each one of them is \emph{shortest regarding each one of the $q$ criteria} is sought for, if exists.
After some analysis, a polynomial algorithm solving the problem is presented and analyzed. 
We first reduce the problem to its single criterion version, then reduce the latter problem to finding $k$ disjoint paths from $s$ to $t$ in a certain sub-graph of $G$, if they exist. Finally we present an algorithm for finding them, if they exist, using known techniques: max-flow finding and flow decomposition.

We assume that each node is reachable from $s$ and that $t$ is reachable from each node in $G$; otherwise, the extra nodes could be removed from $G$.
Let distances $d_i(s,x)$ and $d_i(y,t)$, $x,y \in V$, denote the lengths of the shortest paths from $s$ to $x$ and from $y$ to $t$, respectively, w.r.t.\ criterion $c_i$ in $G$.
We assume that there is no negative cycle in $G$ w.r.t.\ any weight function $w_i$, in order for the shortest paths to exist. (See, e.g., \cite{Corman09} for the basic information on graph algorithms.)

Let us define the auxiliary \emph{aggregated weight} $w(e) = \sum_i w_i(e)$ for each edge $e \in E$, and the auxiliary \emph{aggregated criterion} $c(P)=\sum_{e \in P} w(e)$ for each path $P$ in $G$. Note that there is no negative cycle in $G$ w.r.t.\ weight function $w$, by our assumption; hence, shortest paths w.r.t.\ $c$ exist, and thus distances $d(s,x)$ and $d(y,t)$, $x,y \in V$, w.r.t.\ $w$ are well defined. 
Observe that $d(s,t) = \min_{P} c(P) = \min_{P} \sum_i c_i(P) \ge \sum_i \min_{P} c_i(P) = \sum_i d_i (s,t)$, where each minimum is taken over all paths $P$ from $s$ to $t$ in $G$.
Moreover, the equality $d(s,t) = \sum_i d_i (s,t)$ holds if and only if there exist paths from $s$ to $t$ shortest  w.r.t.\ each criterion $c_i$; in this case, these paths and only these are shortest w.r.t.\ criterion $c$.

As a consequence, we obtain the reduction from our problem to the auxiliary single criterion problem, as follows. If $d(s,t) > \sum_i d_i (s,t)$, then no paths from $s$ to $t$ shortest  w.r.t.\ each criterion $c_i$ exist. Checking this could be made via $q+1$ executions of algorithm Dijkstra on $G$, w.r.t.\ each criterion $c_i$ and w.r.t.\ criterion $c$. Otherwise, the required $k$ disjoint paths are the \emph{$k$-disjoint paths shortest w.r.t.\ criterion $c$, if they exist}. In what follows, we present the solution---an analysis and an algorithm---to the single criterion disjoint shortest $k$ paths problem. 
Let us begin with the problem analysis.

\begin{lemma}
\label{l:sub-graph}
\begin{enumerate}
	\item
	Node $u$ belongs to at least one shortest path from $s$ to $t$ if and only if $d(s,u) + d(u,t) = d(s,t)$.
		\item
	Edge $(u,v)$ belongs to at least one shortest path from $s$ to $t$ if and only if $d(s,u) + w(u,v) + d(v,t) = d(s,t)$.
\end{enumerate}
\end{lemma}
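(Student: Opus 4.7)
The plan is to invoke the optimality principle of shortest paths: every sub-path of a shortest path is itself a shortest path (w.r.t.\ the aggregated weight $w$, which has no negative cycles by assumption, so all distances $d(s,\cdot),\, d(\cdot,t)$ are well defined). I would treat part (1) first and then derive part (2) by an analogous argument applied to the two endpoints of the edge.

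For part (1), the forward direction: assume $u$ lies on some shortest $s$-$t$ path $P$, decompose $P = P_1 \cdot P_2$ at $u$, and observe that neither half can be longer than $d(s,u)$ or $d(u,t)$ respectively, for otherwise concatenating a better half with the other half would yield an $s$-$t$ walk of weight less than $d(s,t)$, contradicting optimality. Hence $c(P_1)=d(s,u)$ and $c(P_2)=d(u,t)$, and summing gives $d(s,t)=d(s,u)+d(u,t)$. For the reverse direction, let $Q_1,Q_2$ be shortest $s$-$u$ and $u$-$t$ paths respectively; their concatenation is an $s$-$t$ walk of weight exactly $d(s,u)+d(u,t)=d(s,t)$, hence a shortest walk passing through $u$.

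For part (2), I would simply apply part (1) twice (at $u$ and at $v$) combined with the fact that the edge $(u,v)$ can be used as a shortest $u$-$v$ path whenever $w(u,v)=d(u,v)$. Explicitly: if $(u,v)$ lies on a shortest $s$-$t$ path, decompose that path as $P_1 \cdot (u,v) \cdot P_2$ and argue by optimality that $c(P_1)=d(s,u)$, $c(P_2)=d(v,t)$, giving the claimed identity; conversely, if the identity holds, concatenate a shortest $s$-$u$ path, the edge $(u,v)$, and a shortest $v$-$t$ path.

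The main obstacle is subtler than the algebra: ``shortest path'' should mean a \emph{simple} path, but concatenating two shortest paths only yields a walk. The standard cure is to exploit the no-negative-cycle assumption: any shortest walk can be shortcut to a simple path of no greater weight by excising cycles, and since we start at the optimum $d(s,t)$, every excised cycle must have weight exactly zero. I would argue that such shortcutting can be performed while retaining $u$ (resp.\ the edge $(u,v)$) on the resulting simple path, since any repeated vertex witnessing a removable cycle can be chosen distinct from $u$ (both $Q_1$ and $Q_2$ are already simple, so $u$ appears only at their join). This technicality is the only non-routine point; with it handled, both equivalences follow directly from the optimality principle.
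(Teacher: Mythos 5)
Your proof takes the same route as the paper's: the forward directions follow from the optimality of sub-paths of a shortest path, and the reverse directions from concatenating a shortest $s$-$u$ path with a shortest $u$-$t$ path (resp.\ a shortest $s$-$u$ path, the edge $(u,v)$, and a shortest $v$-$t$ path); the paper's own proof is a terser version of exactly this. One caveat on the technicality you yourself raise: your fix for the walk-versus-simple-path issue does not quite work, because a vertex $x \neq u$ that occurs once in $Q_1$ and once in $Q_2$ witnesses a zero-weight cycle that straddles the join and therefore \emph{contains} $u$, so excising it removes $u$ from the path --- and this can be unavoidable (e.g., with edges $s\to a$, $a\to u$, $u\to a$, $a\to t$ all of weight $0$, the identity $d(s,u)+d(u,t)=d(s,t)$ holds, yet $u$ lies on no \emph{simple} shortest path). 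The issue vanishes when all weights are strictly positive, since then the concatenation can contain no cycle at all; the paper simply ignores it.
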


\begin{proof}
	(1) If a path $P$ from $s$ to $t$ going via $u$ is shortest, then it is known that its parts from $s$ to $u$ and from $u$ to $t$ are also shortest. In other words, their lengths are $d(s,u)$ and $d(u,t)$, respectively. The equation as required follows. If $d(s,u) + d(u,t) = d(s,t)$, then the concatenation of the shortest paths from $s$ to $u$ and from $u$ to $t$ is a path from $s$ to $t$ of length $d(s,t)$. The proof of item (2) is similar. \qed
\end{proof}

Let us define $\tilde V$ as the subset of nodes as in Lemma~\ref{l:sub-graph}(1) and $\tilde E$ as the subset of edges as in Lemma~\ref{l:sub-graph}(2). We denote by $\tilde G$ the (sub-)graph $(\tilde V,\tilde E)$.

\begin{lemma}
	\label{l:any_path}
	\begin{enumerate}
		\item
		Each shortest path from $s$ to $t$ is contained in $\tilde G$.
		\item
		Each path from $s$ to $t$ contained in $\tilde G$ is shortest.
	\end{enumerate}
\end{lemma}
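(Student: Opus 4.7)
The plan is to derive both parts of the lemma by a direct application of Lemma~\ref{l:sub-graph}, together with the standard sub-path optimality property of shortest paths. No subtle new machinery should be needed; everything hinges on the characterizations already established.

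For part (1), I would take an arbitrary shortest $s$-$t$ path $P$ and, for any node $u$ on $P$, split $P$ into its prefix from $s$ to $u$ and its suffix from $u$ to $t$. Each of these two pieces must itself be shortest between its endpoints, for otherwise replacing it by a shorter alternative would yield an $s$-$t$ path strictly shorter than $P$, contradicting that $P$ is shortest. Consequently the lengths of the prefix and suffix equal $d(s,u)$ and $d(u,t)$, so $d(s,u)+d(u,t)=d(s,t)$, placing $u$ in $\tilde V$ by Lemma~\ref{l:sub-graph}(1). Repeating the argument on the split (prefix from $s$ to $u$) $+$ edge $(u,v)$ $+$ (suffix from $v$ to $t$) for each edge $(u,v)$ of $P$ gives $d(s,u)+w(u,v)+d(v,t)=d(s,t)$, placing $(u,v)$ in $\tilde E$ by Lemma~\ref{l:sub-graph}(2). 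Hence every node and edge of $P$ lies in $\tilde G$.

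For part (2), I would let $P=(s=v_0,v_1,\dots,v_\ell=t)$ be an arbitrary path contained in $\tilde G$ and show that its length equals $d(s,t)$. Since $(v_i,v_{i+1})\in\tilde E$, Lemma~\ref{l:sub-graph}(2) gives $d(s,v_i)+w(v_i,v_{i+1})+d(v_{i+1},t)=d(s,t)$, while $v_{i+1}\in\tilde V$ gives $d(s,v_{i+1})+d(v_{i+1},t)=d(s,t)$ by Lemma~\ref{l:sub-graph}(1). Subtracting yields the key identity $w(v_i,v_{i+1})=d(s,v_{i+1})-d(s,v_i)$, which telescopes along $P$ to
\[
c(P)=\sum_{i=0}^{\ell-1} w(v_i,v_{i+1})=d(s,v_\ell)-d(s,v_0)=d(s,t),
\]
so $P$ is a shortest $s$-$t$ path.

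I do not anticipate a genuine obstacle; the only point requiring minor care is that paths in $\tilde G$ need not be simple (in principle $\tilde G$ could contain zero-length cycles, since the telescoping identity forces every directed cycle in $\tilde G$ to have aggregated weight zero). This causes no harm because the telescoping argument above depends only on the endpoints $v_0=s$ and $v_\ell=t$, not on simplicity. Together, parts (1) and (2) justify reducing the single-criterion disjoint shortest $k$ paths problem to that of finding $k$ (edge-)disjoint $s$-$t$ paths in $\tilde G$, which is the setup exploited in the subsequent max-flow/flow-decomposition step.
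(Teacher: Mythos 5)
Your proof is correct. Part (1) coincides with the paper's (the paper simply calls it a straightforward corollary of Lemma~\ref{l:sub-graph}; you spell out the standard sub-path-optimality argument). For part (2), however, you take a genuinely different route: the paper argues by contradiction, locating the first node $v$ on $P$ whose prefix $P_v$ is not shortest and then splicing a shortest $s$--$v$ path onto a shortest $v$--$t$ path to obtain an $s$--$t$ path of length strictly less than $d(s,t)$, using only the edge characterization of Lemma~\ref{l:sub-graph}(2). You instead subtract the node identity $d(s,v_{i+1})+d(v_{i+1},t)=d(s,t)$ from the edge identity $d(s,v_i)+w(v_i,v_{i+1})+d(v_{i+1},t)=d(s,t)$ to get $w(v_i,v_{i+1})=d(s,v_{i+1})-d(s,v_i)$ and telescope. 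Your computation is more direct and actually proves something slightly stronger -- every prefix of a path in $\tilde G$ starting at $s$ is itself shortest, and every cycle of $\tilde G$ has aggregated weight zero -- which is a useful structural fact (and your remark that non-simplicity is harmless is a point the paper glosses over). The paper's contradiction argument needs only Lemma~\ref{l:sub-graph}(2), whereas yours also invokes Lemma~\ref{l:sub-graph}(1); this costs nothing, since the endpoints of any edge of $\tilde E$ automatically satisfy the node condition. Both arguments are valid under the paper's standing assumption that no negative cycles exist (which guarantees $d(s,s)=0$ and $c(P)\ge d(s,t)$ for every $s$--$t$ path).
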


\begin{proof}
	The proof of item (1) is a straightforward corollary from Lemma~\ref{l:sub-graph}. 
	
	(2) Let $P$ be any path from $s$ to $t$ in $\tilde G$. Assume to the contrary that $c(P) > d(s,t)$. Let us denote by $P_v$ the prefix of $P$ ending at $v$, $v \in P$. Note that the (degenerate) path $P_s$ from $s$ to itself is shortest: $c(P_s) = 0 = d(s,s)$. Let $v$ be the first node on $P$ such that $c(P_v) > d(s,v)$; let $P'$ be a shortest path from $s$ to $v$, $c(P') < c(P_v)$. Denote by $P''$ some shortest path from $v$ to $t$. Let $(u,v) \in \tilde G$ be the edge on $P$ entering $v$. By definition of $v$, $c(P_u) = d(s,u)$. By Lemma~\ref{l:sub-graph}(2), $d(s,u) + w(u,v) + d(v,t) = d(s,t)$. Let us concatenate $P'$ and $P''$.
	$$c(P' \cdot P'') < c(P_v \cdot P'') =  c(P_u \cdot (u,v) \cdot P'') = d(s,u) + w(u,v) + d(v,t) = d(s,t).$$
	Thus, $c(P' \cdot P'') < d(s,t)$,---a contradiction to the definition of $d(s,t)$. \qed
\end{proof}

%
%

By Lemma~\ref{l:any_path}, we have a reduction from the single criterion disjoint shortest $k$ paths problem to \emph{finding $k$ disjoint paths from $s$ to $t$ in $\tilde G$, if they exist}. Finding such paths, if they exist, may be done by known max-flow techniques. 
Let $N$ be the flow network ($\tilde G,s,t$) with unit capacities of all its edges. In what follows, we omit detailed proofs, since the material is basic.

\begin{algorithm}[H]
\caption{$k$-Disjoint All-Criteria-Shortest Paths Finding}
\label{algo:allcriteria}
\KwIn{A directed graph $G= (V,E)$, weight functions $w_i$, $1 \le i \le q$, on edge set $E$, source node $s$, destination node $t$, and integer $k$}
\KwOut{$k$-Disjoint All-Criteria-Shortest Paths, if they exist}

    Compute the aggregated weight function $w(e) = \sum_{i=1}^{q} w_i(e)$, for all edges $e \in E$\\
    Run $q+1$ times algorithm Dijkstra for finding distance functions $d$ and $d_i$ w.r.t.\ weights $w$ and $w_i$, respectively, $1 \le i \le q$ \\
    \If{$d(s,t) > \sum_{i=1}^{q} d_i(s,t)$}{
        return ``No path from $s$ to $t$ shortest w.r.t.\ each criterion $c_i$ exist''
    }
    \Else{
        set sets $\tilde V$ and $\tilde E$ be empty\\       
        \SetKwFunction{FMain}{}
        \SetKwProg{Fn}{for each node $u$ in $V$}{}{}
        \DontPrintSemicolon
        \Fn{}{
            \If{$d(s,u) + d(u,t) = d(s,t)$}{
                    add node $u$ to $\tilde V$
            }   
        }
        \SetKwFunction{FMain}{}
        \SetKwProg{Fn}{for each edge $(u,v)$ in $E$}{}{}
        \DontPrintSemicolon
        \Fn{}{
                  \If{$d(s,u) + w(u,v) + d(v,t) = d(s,t)$}{
                    add edge $(u,v)$ to $\tilde E$
                }  
            }
        construct flow network $N = (\tilde G= (\tilde V, \tilde E), s,t)$ with unit capacities on all edges\\
        run the max-flow algorithm on $N$, finding flow $f^*$ \\
        \If{the value of $f^*$ is less than $k$}{
        return ``There exist no $k$ paths from $s$ to $t$ shortest w.r.t.\ each criterion $c_i$''
        }
        \Else{
        
        initialize path set $\cal P$ be empty\\ 
           
        \SetKwFunction{FMain}{}
        \SetKwProg{Fn}{repeat $k$ times}{}{}
        \DontPrintSemicolon
        \Fn{}{  
            \SetKwFunction{FMain}{}
            \SetKwProg{Fn}{Phase 1}{}{}
            \DontPrintSemicolon
            \Fn{}{
                set stack $S$ be empty and set $v$ be $t$, and mark it\\
                \SetKwFunction{FMain}{}
                \SetKwProg{Fn}{repeat}{}{}
                \DontPrintSemicolon
                \Fn{}{
                    choose an edge $(u,v)$ with $f(u,v)=1$ and push it into $S$\\
                    set $v$ be $u$, and mark it \\
                    If $v=s$, break the repeat loop and go to Phase 3\\
                    If $v$ is marked, suspend the repeat loop and go to Phase 2\\
                     }  
                }
            \SetKwFunction{FMain}{}
            \SetKwProg{Fn}{Phase 2}{}{}
            \DontPrintSemicolon
            \Fn{}{
                set $z$ be $v$\\
                \SetKwFunction{FMain}{}
                \SetKwProg{Fn}{repeat}{}{}
                \DontPrintSemicolon
                \Fn{}{
                    pop edge $(u,v)$ from $S$, and set $f(u,v)$= 0\\
                    set $v$ be $u$, and unmark it \\
                    If $v=z$, mark $v$ and resume the repeat loop of Phase 1\\
                     }         
                }
                \SetKwFunction{FMain}{}
                \SetKwProg{Fn}{Phase 3}{}{}
                \DontPrintSemicolon
                \Fn{}{
                    set edge list $P$ be empty and
                    unmark $v=s$\\
                    \SetKwFunction{FMain}{}
                    \SetKwProg{Fn}{repeat while $v \neq t$}{}{}
                    \DontPrintSemicolon
                    \Fn{}{
                         pop edge $(v,u)$ from $S$\\
                         set $f(u,v)$= 0, and add $(v,u)$ to $P$\\
                         set $v$ be $u$, and unmark it \\
                        }
                    add $P$ to $\cal P$\\
                    }  
        }
    return $\cal P$
    }
}
\end{algorithm}

\begin{proposition}
\label{p:prop1}
	A set of $k$ disjoint paths from $s$ to $t$ in $\tilde G$ exists if and only if the value of maximal flow in $N$ is at least $k$.
\end{proposition}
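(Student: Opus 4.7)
The plan is to prove both directions of the equivalence using standard max-flow machinery, specifically the integrality of max-flow on integer-capacitated networks together with flow decomposition. Since the material is indeed basic (as the authors themselves note), I expect no serious obstacle; the task is to assemble the right pieces in the correct order.

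For the forward direction, suppose $P_1,\dots,P_k$ are $k$ edge-disjoint paths from $s$ to $t$ in $\tilde G$. Define $f : \tilde E \to \{0,1\}$ by setting $f(e)=1$ if and only if $e$ lies on some $P_j$; edge-disjointness guarantees well-definedness and respects the unit capacities. Flow conservation at each internal node holds because each $P_j$ contributes exactly one unit in and one unit out at every intermediate vertex it visits. The total flow leaving $s$ equals $k$ (one unit per path), so the value of $f$ is exactly $k$, and hence the max-flow value is at least $k$.

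For the reverse direction, assume the max-flow in $N$ has value at least $k$. Because all capacities are integers, a standard integrality result (e.g., via augmenting paths or Ford--Fulkerson) yields an integer maximum flow $f^*$; with unit capacities this means $f^*(e) \in \{0,1\}$ for every $e \in \tilde E$. Applying flow decomposition to $f^*$ produces a collection of $s$-$t$ paths and cycles such that the sum of their flow values on each edge equals $f^*(e)$; since each edge carries at most one unit, each edge appears in at most one member of the decomposition. The total flow value, which is at least $k$, equals the number of $s$-$t$ paths in the decomposition (cycles contribute zero to the $s$-$t$ value), so the decomposition contains at least $k$ paths from $s$ to $t$, and these paths are pairwise edge-disjoint precisely because each edge carries at most one unit of flow. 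Selecting any $k$ of them yields the desired $k$ edge-disjoint $s$-$t$ paths in $\tilde G$.

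Combining the two directions gives the equivalence. The only nuance worth noting in the write-up is that the decomposition is applied to an \emph{integer} maximum flow (needed so that edge values are $0$ or $1$ rather than fractional); this ensures that path uses of an edge are disjoint and that the count of $s$-$t$ paths matches the flow value. Everything else is routine.
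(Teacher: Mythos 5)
Your proof is correct and takes essentially the same approach as the paper: the easy direction sends one unit of flow along each path, and the converse uses integrality of max-flow plus flow decomposition into $s$-$t$ paths and cycles, discarding the cycles. The only difference is presentational---the paper spells out the decomposition as an explicit stack-based, three-phase routine (because it reuses that routine to extract the paths in its algorithm), whereas you invoke the flow decomposition theorem as a known result.
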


\begin{proof}
   \textbf{direction only if}
   assume that $\cal P$ is a set of $k$ disjoint paths from $s$ to $t$. Let us define flow $f$ by setting it to 1 on all edges belonging to the paths in $\cal P$ and to 0 on all other edges. It is easy to see that $f$ is a flow of value $k$ from $s$ to $t$ in $N$. Hence, the value of a max-flow in $N$ is at least $k$.
   
   \textbf{direction if}
   let $f$ be an integer (that is 0/1) flow in $N$ of value at least $k$. 
   A set $\cal P$ of $k$ disjoint paths from $s$ to $t$ in $\tilde G$ is produced by executing the following triple-phased \emph{path-finding routine} $k$ times, beginning from $f=f_0$ and an empty path set $\cal P$. 
   We denote by $E_f$ the (dynamic) set of edges in $N$ with the value 1 of flow $f$.

   \emph{Phase 1}
   Set stack $S$ to be empty. Set to $v$ be $t$ and mark it. Choose an edge $(u,v)$ in $E_f$ and push it into $S$. Set $v$ to be $u$, mark it, and continue in the same way. If we arrived at $v=s$, go to Phase 3. If we arrived at a marked node $v$, suspend Phase 1 and go to Phase 2.
   
   \emph{Phase 2}
   Set $z=v$. Pop edge $(u,v)$ from $S$. Unmark $u$, set $f(u,v)=0$, set $v$ to be $u$, and continue in the same way. When arrived at $v=z$, mark $v$ and resume Phase 1.
   
   \emph{Phase 3}
   Set list $P$ to be empty. Unmark $v=s$. Pop edge $(u,v)$ from $S$. Unmark $u$, add $(u,v)$ to $P$, set $f(u,v)=0$, set $v$ to be $u$, and continue in the same way. Upon arrival at $v=t$, add $P$ to $\cal P$.

Let us explain briefly the correctness. After each execution of Phase 2 (removal of a flow cycle), $f$ becomes a correct flow with the same value. After each execution of Phase 3 (removal of a flow path), $f$ becomes a correct flow with a value smaller by 1. 
At Phase 1, since the flow value is non-zero, there exists at least one edge in $E_f$ entering $t$. Since flow $f$ is correct, at each node $v$, the numbers of edges incoming and outgoing $v$ are equal in $E_f$. Hence, if there is an edge outgoing $v$ in $E_f$, then also an edge incoming $v$ exists in $E_f$. \qed
\end{proof}

We conclude that for solving the problem, it is sufficient to find a max-flow in $N$, and if the flow value is at least $k$, execute the path-finding routine $k$ times.  The pseudo-code of the described solution scheme is presented in Algorithm~\ref{algo:allcriteria}.
Let us analyse the running time. A flow either of value $k$, if such a one exists, or a max-flow, otherwise, in a network with unit edge capacities can be found in time $O(\min(k|E|, |E|^{3/2}))$ (see \cite{ahuja1993network, even_2011}). 
All executions of the path-finding routine together take $O(|E|)$ time, since each edge is processed in time $O(1)$ in total. Summarizing, the running time bound is $O(\min(k|E|, |E|^{3/2}))$. 


\section{Conclusion}
We presented polynomial time multi-criteria (secure) paths algorithms, which include prioritized multi-criteria $k$-shortest (secure) paths algorithm, multi-criteria 2-disjoint (vertex/edge) shortest (secure) paths algorithm for a undirected graph, and $k$-disjoint all-criteria-shortest (secure) paths algorithm for a directed graph. We believe that we open a ground for exploring more cases in which polynomial multi-criteria $k$-shortest paths may exist.

%
%
%
\bibliographystyle{splncs04}
\bibliography{mybibliography}

\end{document}